\newtheorem{theorem}{\textbf{Theorem}}
\newtheorem{corollary}{\textbf{Corollary}}
\begin{document}

\title{Capacity Scaling Law by Multiuser Diversity in Cognitive Radio Systems}

\author{\IEEEauthorblockN{Jun-pyo Hong and Wan Choi}
\IEEEauthorblockA{Dept. of Electrical Engineering, Korea Advanced Institute of Science and Technology (KAIST)\\
Email: \{jp\_hong, w.choi\}@kaist.ac.kr}} \maketitle


\begin{abstract}
This paper analyzes the multiuser diversity gain in a cognitive radio (CR) system where secondary transmitters opportunistically utilize the spectrum licensed to primary users only when it is not occupied by the primary users. To protect the primary users from the interference caused by the missed detection of primary transmissions in the secondary network, minimum average throughput of the primary network is guaranteed by transmit power control at the secondary transmitters. The traffic dynamics of a primary network are also considered in our analysis. We derive the average achievable capacity of the secondary network and analyze its asymptotic behaviors to characterize the multiuser diversity gains in the CR system.
\end{abstract}

\section{Introduction}
The demand for wireless spectrum is constantly increasing as various
wireless communication services have appeared. Correspondingly the
available spectrum has become scarce. Current spectrum allocation
policies aggravate spectrum scarcity since a particular spectrum is
dedicated to only specific users as Federal Communications
Commission (FCC) reported \cite{FCC_Report}. As a solution to
improve the spectrum efficiency, \emph{cognitive radio} (CR) has
emerged, where secondary (unlicensed) users try to access the
spectrum without interfering with communications of primary users.
The term CR is classified into several techniques depending on the way to access the spectrum licensed to the primary users:
\emph{overlaid}, \emph{underlaid}, and \emph{interweaved} CR systems
\cite{Mag_Jafar}.
Our work focus on interweaved CR system where the secondary users are allowed to access the
spectrum licensed to primary users only when primary users do not
occupy the spectrum. This approach comes from the idea of
\emph{opportunistic communication} \cite{Mitola}. Secondary users
monitor whether the spectrum is occupied by primary users in order
to opportunistically communicate over vacant spectrum without
interfering with primary users.
Stable throughput of an
interweaved CR system was analyzed by reflecting random packet
arrival of a primary user using queueing process in \cite{Osvaldo}
and \cite{Osvaldo_1}.
Srinivasa and Jafar \cite{CR_Jafar} studied the optimum number of
secondary users that maximizes the total throughput in a
decentralized CR system.

Secondary user scheduling and medium access control play key roles
in CR systems but they have not been well studied yet. It has been
well known that opportunistic user selection for transmission
provides a multiuser diversity gain due to fluctuations of fading
channels \cite{Viswanth2002}. For non-CR systems, there have been
many studies on characterizing the multiuser diversity gains
\cite{MUD_1, MUD_2, MUD_3, Choi2008}. These studies showed that the
multiuser diversity gain in terms of average capacity grows like
$\log_2(\ln N)$ and $\sqrt{2\log_2 N} $ in Rayleigh fading channels
and lognormal shadowing channels, respectively, where $N$ is the
number of users. The multiuser diversity gain of a secondary network
in an underlaid CR system was recently discussed in \cite{MUD_SS,
MUD_SS_Liang}. It was shown that the average capacity of the
secondary network scales like $\log_2(\ln N)$ and $\log_2 N$ under
the finite and the infinite peak transmit power constraints at the
secondary transmitters, respectively. Moreover, the scaling laws of
underlaid CR system in a cognitive ad hoc network was studied in
\cite{Scalinglaw_Tarokh} although it did not focus on the multiuser
diversity gain coming from opportunistic user selection. Despite the
fact that the term CR typically refers to interweaved CR systems,
however, the multiuser diversity gain in interweaved CR systems has
not been identified.

In this paper, we investigate characteristics of multiuser diversity
gains in interweaved CR systems by deriving average achievable
capacity of a secondary network. Multiuser scheduling and transmit
power control are employed in a secondary network to maximize the
achievable capacity of a secondary user network and to satisfy a
quality of service (Qos) constraint of a primary user network.
Unlike underlaid CR systems, secondary users are not allowed to
access the spectrum licensed to primary users when a secondary
network detects primary users' transmission. So our analysis
reflects both traffic dynamics of a primary user network and sensing
reliability in a secondary network. Furthermore, this paper takes
into account the interference from a primary transmitter contrary to
\cite{MUD_SS}. Our asymptotic analysis and numerical results
insightfully capture the key characteristics of multiuser diversity
gains in interweaved CR systems. It is shown that multiuser
diversity gains in interweaved CR systems are quite different from
those in non-CR based systems due to the QoS constraint on a primary
user network. Moreover, even if the secondary users of interweaved
and underlaid CR systems have different ways to access the spectrum
licensed to primary users, the multiuser diversity gains of both CR
systems show very similar asymptotic characteristics.



\vspace{0.05in}
\section{System Description}

\subsection{System and Channel Model}

\begin{figure}[!t]
    \centerline{\psfig{figure=./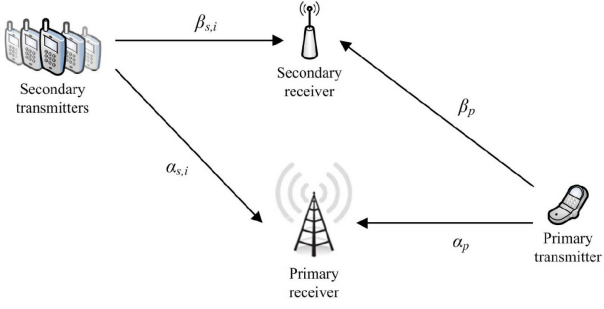,width=0.9\columnwidth} }
    \caption{System model}
    \label{Fig:System_Model}
\end{figure}

The system model considered in this paper is illustrated in Fig.
\ref{Fig:System_Model}, which consists of a primary receiver
(primary base station), a primary transmitter, a secondary receiver
(secondary base station), and $N$ secondary transmitters which try
to access the spectrum licensed to a primary transmitter. The
additive thermal noise is assumed to be zero mean mutually
independent, circularly symmetric, complex Gaussian random variable
with unit variance. The channel gains from the $i$th secondary
transmitter to the primary receiver and to the secondary receiver
are denoted by $\alpha_{s,i}$ and $\beta_{s,i}$, respectively.
Similarly, $\alpha_{p}$ and $\beta_{p}$ are the channel gains from
the primary transmitter to the primary receiver and to the secondary
receiver, respectively. All channel gains are assumed to be
independent and identically distributed (i.i.d.) exponential random
variable with unit mean in Rayleigh flat fading channels. It is also
assumed that the $i$th secondary transmitter knows the instantaneous
channel state information of $\alpha_{s,i}$ and $\beta_{s,i}$ and
knows the statistical channel information of $\alpha_{p}$ and
$\beta_{p}$. The secondary transmitters are able to obtain the
instantaneous channel state information of $\alpha_{s,i}$ and
$\beta_{s,i}$ through a periodic sensing of pilot signal from the
primary and secondary receivers, according to the channel
reciprocity.


In our system model, all the secondary transmitters which have
packets to transmit compute the maximum allowable transmit power
$P_{s,i}$ based on $\alpha_{s,i}$ to maximize their throughput while
ensuring the QoS of primary users. Then, the secondary transmitters
report their 1-bit sensing results with calculated transmit power
$P_{s,i}$. The secondary receiver fuses received sensing results
from the secondary transmitters and makes a global decision.
The reliability of cooperative sensing is captured by detection probability $p_d$ and false alarm probability $p_f$.
However, detailed discussion about how to fuse the received sensing
results and how to reflect the performance of cooperative sensing
are beyond the scope of this paper. If the global decision declares
the absence of primary users' traffic in the spectrum, the secondary
receiver selects the secondary transmitter which has the largest
signal-to-noise power ratio (SNR) $\beta_{s,i}P_{s,i}$ to maximize
secondary network throughput. Otherwise, all secondary transmitters
are silent during a unit time slot. On the other hand, a primary
transmitter transmits its packets with its maximum power, $P_{p}$.

\subsection{Traffic Model}
We consider a \emph{packet-by-packet} access network where the spectrum licensed to the primary users is accessed in a time slotted manner similarly in \cite{Osvaldo}.
In other words, all primary and secondary transmitters can transmit one packet per each time slot.
The traffic at the primary transmitter is modeled by a
Bernoulli process with average packet arrival rate,
$0\le\lambda\le1$ [packets/slot], and departure rate, $0\le\mu\le1$
[packets/slot]. Note that the departure rate can be interpreted as
the probability of successful packet transmission during a time
slot; moreover, it can directly be translated into average
throughput of a primary user. In contrast to arrival rate $\lambda$,
departure rate $\mu$ is significantly affected by the behavior of
the secondary network.
So the primary user sets a minimum tolerable
departure rate $\mu_{\min}$ to guarantee a certain service rate for
the case of missed detection in the secondary network by
\begin{eqnarray}\label{min_departure}
\mu \ge \mu_{\min}.
\end{eqnarray}
It should be noted that the departure rate without any interference
is greater than the arrival rate for an essentially stable primary
user's queue.

\vspace{0.05in}
\section{QoS Constraint and Secondary User Scheduling}
\subsection{Departure Rate at a Primary Transmitter}

The outage probabilities when there is no interference and when
there is interference from the secondary network are given,
respectively, by
\begin{align}
P_{out} &= \Pr\left[ \log_2\left(1+\alpha_pP_p\right)<R\right]\label{P_outage1} \\
P_{out}^{inf} &= \Pr\left[\log_2\left(1+\frac{\alpha_pP_p}{1+\alpha_{s,i}P_{s,i}}\right)<R\right],
\label{P_outage2}
\end{align}
where $R$ is the required rate at the primary receiver. Then, the
departure rate of the primary transmitter is given by
\begin{eqnarray}\label{departure}
\mu = p_d(1-P_{out})+(1-p_d)(1-P_{out}^{inf}).
\end{eqnarray}
From a perspective of a primary user, a minimum departure rate,
$\mu_{\min}$, needs to be set as a OoS parameter and is notified to
the secondary network so that interference from a secondary network
is regulated to guarantee the minimum departure rate.

\subsection{Power Adaptation and Secondary User Scheduling}
Based on the channel information including the instantaneous value
of $\alpha_{s,i}$ and the statistical properties of $\alpha_p$, each
secondary transmitter can estimate the corresponding outage
probabilities of the primary user, when it is scheduled.
The outage probabilities, (\ref{P_outage1}) and (\ref{P_outage2}),
can be rewritten by
\begin{align}
P_{out} &= \Pr\left[\alpha_p<\frac{2^R-1}{P_p}\right] = 1-e^{-R_p}
\label{out_noinf}
\end{align}
\begin{align}
P_{out}^{inf} &= \Pr\left[\alpha_p<\left(2^R-1\right)\frac{1+\alpha_{s,i}P_{s,i}}{P_p}\right]\nonumber\\
&= 1-e^{-R_p(1+\alpha_{s,i}P_{s,i})}
\label{out_inf}
\end{align}
where $R_p=(2^R-1)/P_p$.
If the $i$th secondary transmitter is scheduled for transmission,
the departure rate of the primary user is obtained by plugging
(\ref{out_noinf}) and (\ref{out_inf}) into (\ref{departure}),
\begin{eqnarray}\label{mu_func}
\mu(\alpha_{s,i}, P_{s,i}) = p_de^{-R_p} + (1-p_d)e^{-R_p(1+\alpha_{s,i}P_{s,i})}.
\end{eqnarray}
The departure rate of the primary user is affected by transmit power
of the secondary transmitter $P_{s,i}$ and instantaneous
interference channel gain $\alpha_{s,i}$.
Hereinafter, we use $\mu(\alpha_{s,i}, P_{s,i})$ instead of $\mu$ to
represent the departure rate as a function of $\alpha_{s,i}$ and
$P_{s,i}$.

The transmit power of the $i$th secondary transmitter is determined
by the value making $\mu(\alpha_{s,i}, P_{s,i})$ equal to
$\mu_{\min}$ since $\mu(\alpha_{s,i}, P_{s,i})$ is inversely
proportional to $P_{s,i}$. The power $P_{\mu,i}$ corresponding to
$\mu_{\min}$ is given by
\begin{eqnarray}\label{P_mu}
P_{\mu,i}=\frac{1}{\alpha_{s,i}}\left[\frac{1}{R_p}\ln\left(\frac{1-p_d}{\mu_{\min}-p_de^{-R_p}}\right)-1\right].
\end{eqnarray}

Since the transmit power of the secondary user is limited by a
maximum power constraint, i.e., $P_{s,i}\le P_{s,\max}$, the
transmit power of the $i$th secondary transmitter is given by
\begin{align} \label{P_si}
P_{s,i}=\left\{ \begin{array}{ll}
0 ~,& P_{\mu,i}<0 \\
P_{\mu,i} ~,& 0\le P_{\mu,i}<P_{s,\max}\\
P_{s,\max} ~,& P_{\mu,i} \geq P_{s,\max}
\end{array} \right.
.
\end{align}
The first case $P_{\mu,i}<0$ corresponds to the situation where the
primary user cannot achieve $\mu_{\min}$ even if a secondary network
do not interfere with the primary user. In this case, all the
secondary transmitters can not transmit anything, regardless of the
sensing result, so that there is no secondary network throughput. In
the other cases, a secondary transmitter can transmit a packet with
adaptively adjusted power based on the value of $\alpha_{s,i}$. Our
analysis focuses on the last two cases where the spectrum licensed
to primary user is opportunistically accessed.
Hence, from (\ref{P_mu}), it is assumed that
\begin{align}
K = \frac{1}{R_p}\ln\left(\frac{1-p_d}{\mu_{\min}-p_de^{-R_p}}\right)-1>0
\end{align}
where a positive real constant $K$ is pre-determined by the system
parameters.

Based on the computed $P_{s,i}$ in (\ref{P_si}), all the secondary
transmitters report their 1-bit sensing results with its transmit
power, $P_{s,i}$. Using the collected sensing results, the secondary
receiver checks the availability of the spectrum licensed to the
primary network. If the spectrum is decided to be vacant, the
secondary receiver selects the $i^*$th secondary transmitter whose
SNR is the largest among $N$ secondary transmitters such as
\begin{align}
i^* = \arg\max_i \gamma_{s,i}~
\end{align}where the received SNR $\gamma_{s,i}$ from the
$i$th secondary transmitter is given by
\begin{align}\label{class_SU_SNR}
\gamma_{s,i}  = \left\{ \begin{array} {ll}
P_{\mu,i}\beta_{s,i}=\frac{K\beta_{s,i}}{\alpha_{s,i}} ~,& P_{\mu,i}<P_{s,\max}\\
P_{s,\max}\beta_{s,i} ~,& P_{\mu,i}\ge P_{s,\max}
\end{array} \right.
.
\end{align}
\vspace{0.05in}
\section{Characteristics of Multiuser Diversity in a Cognitive Radio System}
Unfortunately, analyzing the average capacity of secondary network is not easy to handle because extracting the PDFs of $\alpha_{s,i^*}$ and $\beta_{s,i^*}$from that of $\gamma_{s,i^*}$ is intractable. Instead, to see the exact scaling law of multiuser diversity gain, this section provides lower and upper bounds on the asymptotic average achievable capacity of the secondary network and derives the
scaling law of the average achievable capacity of the secondary network from the two bounds.

\subsection{A Lower Bound}
To give the independency between $\alpha_{s,i}$ and $\beta_{s,i}$ in
the selection of a secondary transmitter, scheduling procedure is
divided into two stages. In the first stage, the secondary receiver
determines a set $\mathcal{S}$ consisting of the candidates for
transmission which have low enough $\alpha_{s,i}$ so that the
calculated $\underset{i}{\max} P_{\mu,i} >P_{s,\max}$, and hence the transmit
power is saturated with $P_{s,\max}$. In the second stage, if the
set $\mathcal{S}$ is not empty, the $i^{\dagger}$ secondary
transmitter which has the largest $\beta_{s,i}$ among the elements
of $\mathcal{S}$ is finally scheduled with its maximum allowable
transmit power $P_{s,\max}$. Otherwise, the $i^{\dagger}$th
secondary transmitter which has the maximum $\gamma_{s,i}$ among the
$N$ secondary transmitters is scheduled with the transmit power of
$P_{\mu,i^{\dagger}}$. Hence, the index of the scheduled secondary
transmitter is given by
\begin{align}
i^{\dagger} = \left\{ \begin{array}{ll}
\arg\underset{i}{\max} \gamma_{s,i} ,& \underset{i}{\max}P_{\mu,i}\le P_{s,\max}\\
\arg\underset{i\in\mathcal{S}}{\max} \beta_{s,i} ,&
\underset{i}{\max}P_{\mu,i}> P_{s,\max}
\end{array}\right..
\end{align}
This two-stage scheduling certainly provides a lower bound on the
average achievable capacity of the secondary network, which is given
in the following theorem. \vspace{0.05in}
\begin{theorem}
For finite $P_{s,\max}$, a lower bound on the average achievable
capacity of the secondary network converges to
\begin{align}\label{Cap_low_complete}
&\mathbb{E}[C_{s,low}] = e^{\frac{K}{P_{s,\max}\ln(1\!-\!\frac{1}{N})}}\Bigg[\frac{\lambda}{\mu_{\min}}(1-p_d)\Bigg(\! \log_2(1\!+\!b_{N,low1})\nonumber\\
&-\frac{e^{\frac{1}{P_p}}\textrm{E}_1\!\!\left(\frac{1}{P_p}\right)}{\ln 2} \Bigg) + \left(\! 1\!-\!\frac{\lambda}{\mu_{\min}} \!\right)(1-p_f)\log_2(1+b_{N,low1}) \Bigg] \nonumber\\
&+ \left(1-e^{\frac{K}{P_{s,\max}\ln(1-\frac{1}{N})}} \right)\Bigg[\frac{\lambda}{M_{avg,l}}(1-p_d)\nonumber\\
&\times\Bigg( \log_2 \left( 1+P_{s,\max}\ln\left( N\left( 1-e^{-\frac{K}{P_{s,\max}}} \right) \right) \right)\nonumber\\
&-\frac{e^{\frac{1}{P_p}}\textrm{E}_1\left(\frac{1}{P_p}\right)}{\ln 2} \Bigg)+\left(1-\frac{\lambda}{M_{avg,l}}\right)(1-p_f)\nonumber\\
&\times\log_2 \left( 1+P_{s,\max}\ln\left( N\left( 1-e^{-\frac{K}{P_{s,\max}}} \right) \right) \right)\Bigg]
\end{align}
as $N$ goes to infinity, where
\begin{align}
b_{N,low1}=P_{s,\max}\mathcal{W}\left(\frac{KN}{P_{s,\max}}e^{\frac{K}{P_{s,\max}}}\right)-K,
\end{align}
\begin{align}
M_{avg,l} = p_de^{-R_p}\!+\!(1\!-\!p_d)\frac{\left(\!
e^{\frac{K(R_pP_{s,\max}+1)}{P_{s,\max}}}-1
\!\right)e^{-R_p(1\!+\!K)}}{\left(\!
e^{\frac{K}{P_{s,\max}}}\!-\!1\! \right)\left( R_pP_{s,\max}\!+\!1
\right)}
\end{align}\label{theorem3}
\end{theorem}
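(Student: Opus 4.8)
The plan is to write down the exact CDF of the scheduled SNR under the two-stage rule, split it according to whether the saturated set $\mathcal{S}=\{i:\alpha_{s,i}\le K/P_{s,\max}\}$ is empty, and then extract the large-$N$ behaviour of each piece by an extreme-value argument. Since each $\alpha_{s,i}$ is unit-mean exponential, a user lies in $\mathcal{S}$ with probability $q=1-e^{-K/P_{s,\max}}$ independently, so $\Pr[\mathcal{S}=\emptyset]=(1-q)^N=e^{-NK/P_{s,\max}}$. This is exactly the event split induced by the definition of $i^{\dagger}$: it weights the first bracket, and its complement weights the second. In the theorem this probability appears as $e^{K/(P_{s,\max}\ln(1-1/N))}$, i.e. the exponent $N$ is replaced by the effective count $-1/\ln(1-1/N)$, the normalisation consistent with the maximum of $N$ i.i.d. exponentials; I would carry this surrogate through to match the stated form.

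Next I would locate the concentration point of the scheduled SNR in each branch. On $\{\mathcal{S}\neq\emptyset\}$ the winner transmits at $P_{s,\max}$ and is $\arg\max_{i\in\mathcal{S}}\beta_{s,i}$; because $\beta$ is exponential and independent of $\alpha$, the relevant gains are still i.i.d. and their maximum concentrates at $\ln\mathbb{E}[|\mathcal{S}|]=\ln(Nq)$, so $\gamma_{s,i^{\dagger}}$ concentrates at $P_{s,\max}\ln(N(1-e^{-K/P_{s,\max}}))$, giving the SNR term of the second bracket. On $\{\mathcal{S}=\emptyset\}$ the winner uses $P_{\mu,i}$ and maximises $K\beta_{s,i}/\alpha_{s,i}$ with $\alpha_{s,i}>K/P_{s,\max}$. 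Computing the per-user joint probability $\Pr[K\beta/\alpha\le x,\ \alpha>K/P_{s,\max}]=e^{-K/P_{s,\max}}(1-g(x))$ with $g(x)=e^{-x/P_{s,\max}}/(1+x/K)$, the conditional CDF of the winner is $(1-g(x))^N$; its centering value $b_{N,low1}$ solves $g(b_{N,low1})=1/N$, and the substitution $w=(x+K)/P_{s,\max}$ turns this into $we^{w}=\frac{KN}{P_{s,\max}}e^{K/P_{s,\max}}$, i.e. $b_{N,low1}=P_{s,\max}\mathcal{W}(\frac{KN}{P_{s,\max}}e^{K/P_{s,\max}})-K$. A concentration argument then replaces the random SNR by its centering value inside $\log_2(1+\cdot)$, producing the logarithmic capacity terms in both brackets.

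The remaining ingredients are the traffic weighting and the primary interference. From the Bernoulli traffic model the steady-state probability that the primary is actually transmitting equals the utilisation $\lambda/\mu$, where the effective departure rate is $\mu_{\min}$ on $\{\mathcal{S}=\emptyset\}$ (power is tuned so that $\mu(\alpha_{s,i},P_{s,i})=\mu_{\min}$) and its average $M_{avg,l}$ on $\{\mathcal{S}\neq\emptyset\}$ (power is clipped to $P_{s,\max}$); I would obtain $M_{avg,l}$ by inserting $P_{s,\max}$ into $\mu(\alpha_{s,i},P_{s,\max})$ from (\ref{mu_func}) and averaging over $\alpha$ truncated to $[0,K/P_{s,\max}]$, a direct exponential integral giving the stated closed form. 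The secondary then transmits with interference only when the primary is busy and a missed detection occurs (probability $\frac{\lambda}{\mu}(1-p_d)$) and without interference when the primary is idle and no false alarm occurs (probability $(1-\frac{\lambda}{\mu})(1-p_f)$). In the interference case the effective rate is $\mathbb{E}_{\beta_p}[\log_2(1+\gamma_{s,i^{\dagger}}/(1+\beta_p P_p))]$; since the centered SNR grows without bound this tends to $\log_2(1+\gamma_{s,i^{\dagger}})-\frac{1}{\ln 2}\mathbb{E}[\ln(1+\beta_p P_p)]$, and the single integral $\int_0^\infty\ln(1+P_p x)e^{-x}\,dx=e^{1/P_p}\textrm{E}_1(1/P_p)$ supplies the constant penalty in both brackets. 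Assembling the two weighted branches yields (\ref{Cap_low_complete}).

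I expect the main obstacle to be the extreme-value step on $\{\mathcal{S}=\emptyset\}$: the scheduled quantity is the maximum of $N$ heavy-tailed ratios $K\beta/\alpha$ restricted to the truncated region, so I must normalise by $\Pr[\mathcal{S}=\emptyset]$ to get a genuine conditional CDF, justify that $b_{N,low1}$ (defined through the Lambert $\mathcal{W}$ function) is the correct centering, and argue that $\mathbb{E}[\log_2(1+\gamma)]$ truly converges to $\log_2(1+b_{N,low1})$ rather than being inflated by the upper tail. Establishing uniform integrability of $\log_2(1+\gamma)$ under $(1-g(x))^N$, together with the matching behaviour of the interference expectation, is the delicate part; the traffic and interference computations are then routine.
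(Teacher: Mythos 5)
Your proposal is correct and takes essentially the same approach as the paper's proof: the same decomposition on whether the saturated set $\mathcal{S}$ (equivalently, the event $\max_i P_{\mu,i}>P_{s,\max}$) is nonempty, the same traffic/sensing weightings $\lambda/\mu_{\min}$ and $\lambda/M_{avg,l}$ multiplying $(1-p_d)$ and $(1-p_f)$ terms, and the same extreme-value centering of the scheduled SNR in each branch. Your explicit computations (the Lambert-$\mathcal{W}$ centering from $g(b_{N,low1})=1/N$, the truncated-exponential average yielding $M_{avg,l}$, and the integral $e^{1/P_p}\textrm{E}_1\left(1/P_p\right)$ for the interference penalty) simply fill in the steps the paper compresses into its appeal to Gumbel/Fr\'echet convergence.
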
where $\mathcal{W}(\cdot)$ denotes a Lambert W function.
\begin{proof}
The proof is given in Appendix.
\end{proof}\vspace{0.05in}

As shown in (\ref{Cap_low_complete}), the lower bound can be divided
into the capacities when the transmit power of the scheduled user is
not bounded and when the transmit power is bounded by $P_{s,\max}$.
The following corollary further simplifies the lower bound.
\vspace{0.05in}
\begin{corollary}\label{C_low_infty}
If we assume that $N$ goes to infinity, the result of Theorem
\ref{theorem3} is approximated as
\begin{align}
&\mathbb{E}[C_{s,low}]\approx \frac{\lambda}{M_{avg,l}}(1-p_d)\log_2(\ln N)\nonumber\\
&~~~\qquad\qquad+ \left( 1-\frac{\lambda}{M_{avg,l}} \right)(1-p_f)\log_2(\ln N)\nonumber\\
&=\left[ \frac{\lambda}{M_{avg,l}}(1-p_d) + \left( 1-\frac{\lambda}{M_{avg,l}} \right)(1-p_f) \right]\log_2(\ln N).
\end{align}
\end{corollary}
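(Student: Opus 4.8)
The plan is to show that the exponential prefactor $e^{K/(P_{s,\max}\ln(1-1/N))}$ vanishes as $N\to\infty$, which collapses the expression in (\ref{Cap_low_complete}) onto its second bracket, and then to extract the leading $\log_2(\ln N)$ growth from what remains. First I would examine the prefactor: since $\ln(1-1/N)=-1/N+O(1/N^2)$, its exponent behaves like $-KN/P_{s,\max}\to-\infty$ (using the standing assumption $K>0$), so $e^{K/(P_{s,\max}\ln(1-1/N))}\to 0$ and correspondingly $1-e^{K/(P_{s,\max}\ln(1-1/N))}\to 1$. Thus the first bracket, which carries $b_{N,low1}$, receives an exponentially small weight while the second bracket becomes asymptotically unweighted.

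Next I would treat the surviving logarithm. Writing $c=1-e^{-K/P_{s,\max}}\in(0,1)$, a fixed constant, I would split $\ln(Nc)=\ln N+\ln c$ and observe that $1+P_{s,\max}(\ln N+\ln c)=P_{s,\max}\ln N\,(1+o(1))$, so that $\log_2\!\left(1+P_{s,\max}\ln(Nc)\right)=\log_2(\ln N)+\log_2 P_{s,\max}+o(1)$. Because $\log_2(\ln N)\to\infty$, every bounded additive offset — the constant $\log_2 P_{s,\max}$ and the fixed quantity $e^{1/P_p}\mathrm{E}_1(1/P_p)/\ln 2$ — is negligible relative to $\log_2(\ln N)$ and may be dropped. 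Both occurrences of the logarithm inside the second bracket therefore reduce to $\log_2(\ln N)$, and factoring this common growth out leaves exactly the coefficient $\frac{\lambda}{M_{avg,l}}(1-p_d)+\left(1-\frac{\lambda}{M_{avg,l}}\right)(1-p_f)$ claimed in the statement.

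Finally I would confirm that the discarded first bracket contributes nothing in the limit. Using the large-argument expansion $\mathcal{W}(x)\sim\ln x$, the Lambert-W term gives $b_{N,low1}\sim P_{s,\max}\ln N$, so $\log_2(1+b_{N,low1})$ itself grows only like $\log_2(\ln N)$; multiplied by the exponentially small prefactor $e^{-KN/P_{s,\max}(1+o(1))}$, the entire first bracket tends to $0$. Collecting the surviving leading terms then yields the stated approximation.

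The main obstacle is the asymptotic bookkeeping rather than any single delicate estimate: one must verify that $\log_2$ of a quantity growing like $\ln N$ is dominated by $\log_2(\ln N)$, while \emph{all} constant offsets, and the whole $b_{N,low1}$ branch despite its own unbounded growth, are asymptotically subordinate. This is precisely what licenses the $\approx$ sign as a statement about the leading-order behavior as $N\to\infty$.
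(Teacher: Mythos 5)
Your proof is correct and follows essentially the same route as the paper: the paper's own proof consists of the single observation that the weight $e^{K/(P_{s,\max}\ln(1-1/N))}$ of the first bracket in (\ref{Cap_low_complete}) decreases to zero with $N$, which is exactly your opening step. The additional bookkeeping you supply --- showing that the Lambert-W growth of $\log_2(1+b_{N,low1})$ is annihilated by the exponentially small weight, and that $\log_2\left(1+P_{s,\max}\ln\left(N(1-e^{-K/P_{s,\max}})\right)\right)$ reduces to $\log_2(\ln N)$ up to bounded offsets --- is left implicit in the paper, so your writeup is a more complete rendering of the same argument.
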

\begin{proof}
In (\ref{Cap_low_complete}), the probability of a secondary
transmission with unbounded power,
$e^{\frac{K}{P_{s,\max}\ln(1-\frac{1}{N})}}$, decreases with $N$.
\end{proof}\vspace{0.05in}

Corollary \ref{C_low_infty} indicates that the scaling law of the
lower bound is $k_{l}\log_2(\ln N)$, where $0\le k_{l}\le 1$ is
constant determined by system parameters.


\subsection{An Upper Bound}
An upper bound can be obtained from the case that
$\underset{i}{\max}\beta_{s,i}$ and $\underset{i}{\min}\alpha_{s,j}$ are the effective
forwarding and interference channel gains of the scheduled secondary
transmitter, respectively. This scenario certainly constructs an
upper bound on the achievable capacity of the secondary network and
the effective SNR of the scheduled secondary transmitter in this
case is given by
\begin{align}\label{snr_type}
\gamma_{s,up} = \left\{ \begin{array}{ll}
\frac{K\underset{i}{\max}\beta_{s,i}}{\underset{j}{\min}\alpha_{s,j}} ,& \underset{i}{\max}P_{\mu,i}\le P_{s,\max}\\
P_{s,\max}\underset{i}{\max}\beta_{s,i} ,& \underset{i}{\max}P_{\mu,i}> P_{s,\max}
\end{array}\right..
\end{align}

The following theorem shows an asymptotic behavior of the upper
bound on the average achieved capacity. \vspace{0.05in}
\begin{theorem}
For finite $P_{s,\max}$, the upper bound of the secondary average
achievable capacity converges to
\begin{align}\label{Cap_up_complete}
&\mathbb{E}[C_{s,up}]= e^{\frac{K}{P_{s,\max}\ln(1\!-\!\frac{1}{N})}}\Bigg[ \frac{\lambda}{\mu_{\min}}(1-p_d)\Bigg(\!\log_2 (P_{s,\max}\ln N)\nonumber\\
&-\!\frac{\textrm{E}_1\!\!\left(\frac{1}{P_p}\right)e^{\frac{1}{P_p}}}{\ln 2} \Bigg)+\left( \!1\!-\!\frac{\lambda}{\mu_{\min}} \right)(1-p_f)\log_2(P_{s,\max}\ln N) \Bigg]\nonumber\\
&+\left( 1-e^{\frac{K}{P_{s,\max}\ln(1-\frac{1}{N})}} \right)\Bigg[ \frac{\lambda}{\textrm{M}_{avg,u}(N)}(1-p_d)\nonumber\\
&\times\left( \log_2(1+P_{s,\max}\ln N)-\frac{\textrm{E}_1\left(\frac{1}{P_p}\right)e^{\frac{1}{P_p}}}{\ln 2} \right)\nonumber\\
&+\left(\!1- \frac{\lambda}{\textrm{M}_{avg,u}(N)}
\!\right)(1-p_f)\log_2(1+P_{s,\max}\ln N) \Bigg]
\end{align}
as $N$ grows to infinity, where
\begin{align}
&\textrm{M}_{avg,u}(N)\nonumber\\
&= p_de^{-R_p}+(1-p_d)\frac{ Ne^{-R_p(1+K)}\left( e^{\frac{K(R_pP_{s,\max}+N)}{P_{s,\max}}}-1 \right) }{ (R_pP_{s,\max}+N)\left( e^{\frac{KN}{P_{s,\max}}}-1 \right) }.
\end{align}
\label{theorem4}
\end{theorem}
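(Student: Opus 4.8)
The plan is to mirror the decomposition used for the lower bound in Theorem \ref{theorem3}, replacing the two-stage surrogate channels by the genuinely optimistic pair $\bigl(\max_i\beta_{s,i},\,\min_j\alpha_{s,j}\bigr)$ defining $\gamma_{s,up}$ in (\ref{snr_type}). Since $\max_i P_{\mu,i}=K/\min_j\alpha_{s,j}$, the transmit power is unsaturated precisely when $\min_j\alpha_{s,j}\ge K/P_{s,\max}$ and saturated otherwise. First I would split $\mathbb{E}[C_{s,up}]$ according to these two disjoint events and compute the probability of the unsaturated regime. Because the $\alpha_{s,j}$ are i.i.d. unit-mean exponentials,
\begin{align}
\Pr\!\left[\min_j\alpha_{s,j}\ge \tfrac{K}{P_{s,\max}}\right]=e^{-NK/P_{s,\max}},
\end{align}
and since $\ln(1-1/N)\to-1/N$, this coincides in the limit with the prefactor $e^{K/(P_{s,\max}\ln(1-1/N))}$, which vanishes as $N\to\infty$.

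Next, within each regime I would further condition on the primary's activity and on the sensing outcome. In steady state the primary is busy with probability $\lambda/\mu$ and idle with probability $1-\lambda/\mu$, where $\mu$ is the prevailing departure rate; the secondary actually transmits only under a missed detection (probability $1-p_d$, with primary interference present) or under no false alarm (probability $1-p_f$, interference-free). In the unsaturated regime the power control in (\ref{P_mu}) forces $\mu=\mu_{\min}$ exactly, which is why $\lambda/\mu_{\min}$ weights that bracket. In the saturated regime the secondary transmits at $P_{s,\max}<P_{\mu,i}$, so the realized departure rate $\mu(\min_j\alpha_{s,j},P_{s,\max})$ exceeds $\mu_{\min}$; its average follows by integrating (\ref{mu_func}) against the law of $\min_j\alpha_{s,j}$ truncated to $[0,K/P_{s,\max}]$, i.e.
\begin{align}
\textrm{M}_{avg,u}(N)=\int_0^{K/P_{s,\max}}\!\!\mu(x,P_{s,\max})\,\frac{Ne^{-Nx}}{1-e^{-NK/P_{s,\max}}}\,dx ,
\end{align}
and evaluating this elementary integral yields the stated closed form.

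It then remains to evaluate the two conditional expected capacities in each regime. The forwarding gain $\max_i\beta_{s,i}$ is the maximum of $N$ unit-mean exponentials and concentrates at $\ln N$, so $\mathbb{E}[\log_2(1+P_{s,\max}\max_i\beta_{s,i})]\to\log_2(1+P_{s,\max}\ln N)$, giving the interference-free capacity of the saturated regime. In the unsaturated regime $\min_j\alpha_{s,j}$, conditioned on exceeding $K/P_{s,\max}$, concentrates at its lower truncation point $K/P_{s,\max}$ (by memorylessness $\min_j\alpha_{s,j}-K/P_{s,\max}$ is exponential with rate $N$), so $K/\min_j\alpha_{s,j}\to P_{s,\max}$ and $\gamma_{s,up}\to P_{s,\max}\ln N$, producing the term $\log_2(P_{s,\max}\ln N)$. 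For the missed-detection contributions the effective signal is divided by $1+\beta_p P_p$; in the high-SNR limit the logarithm separates as $\log_2\gamma_{s,up}-\log_2(1+\beta_p P_p)$, and averaging over the exponential $\beta_p$ gives the loss term $\mathbb{E}_{\beta_p}[\log_2(1+\beta_p P_p)]=e^{1/P_p}\textrm{E}_1(1/P_p)/\ln 2$ subtracted in both interference brackets. Assembling the regime probabilities, the activity/sensing weights, and these capacities reproduces (\ref{Cap_up_complete}).

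The main obstacle is the rigorous passage to the limit inside the logarithm: establishing that $\mathbb{E}[\log_2(1+P_{s,\max}\max_i\beta_{s,i})]$ converges to $\log_2(1+P_{s,\max}\ln N)$, that the conditioned $\min_j\alpha_{s,j}$ concentrates tightly enough at $K/P_{s,\max}$ for the ratio $K/\min_j\alpha_{s,j}$ to be replaceable by $P_{s,\max}$ inside $\log_2(1+\cdot)$, and that the interference average can be controlled uniformly as the SNR grows. By contrast, the closed-form evaluation of $\textrm{M}_{avg,u}(N)$ and the asymptotic identification of the regime probability with the prefactor are routine once the decomposition above is in place.
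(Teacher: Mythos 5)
Your proposal is correct and follows essentially the same route as the paper: the paper's proof (which merely refers back to Theorem 1's appendix argument) uses exactly your decomposition into saturated/unsaturated power regimes weighted by the primary's busy probability ($\lambda/\mu_{\min}$ or $\lambda/\textrm{M}_{avg,u}(N)$) and the sensing outcomes $(1-p_d)$, $(1-p_f)$, followed by extreme-value (Gumbel/Fr\'echet) concentration of the max/min order statistics. Your explicit evaluation of $\textrm{M}_{avg,u}(N)$ as the truncated-exponential average of $\mu(x,P_{s,\max})$ and your memorylessness argument for the conditioned $\min_j\alpha_{s,j}$ are in fact more detailed than what the paper provides, and the limit-interchange issue you flag is glossed over in the paper as well.
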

\begin{proof}
The proof is similar to that of theorem 1.
\end{proof}\vspace{0.05in}

The asymptotic upper bound on the average achievable capacity can be
divided into the two parts according to the type of received SNR
described in (\ref{snr_type}). The following corollary further
simplifies the lower bound. \vspace{0.05in}
\begin{corollary}\label{corollary3}
If we assume that $N$ goes to infinity, the asymptotic upper bound
on the average achievable capacity in (\ref{Cap_up_complete}) is
given by
\begin{align}
&\mathbb{E}[C_{s,up}]\approx \frac{\lambda}{e^{-R_p}}(1-p_d)\log_2(\ln N)\nonumber\\
&~~~\qquad\qquad+ \left( 1-\frac{\lambda}{e^{-R_p}} \right)(1-p_f)\log_2(\ln N)\nonumber\\
&=\left[ \frac{\lambda}{e^{-R_p}}(1-p_d) + \left(1-\frac{\lambda}{e^{-R_p}} \right)(1-p_f) \right]\log_2(\ln N).
\end{align}
\end{corollary}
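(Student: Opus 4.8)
The plan is to follow the same reasoning as in Corollary~\ref{C_low_infty}, but with one extra ingredient: here the surviving coefficient $\textrm{M}_{avg,u}(N)$ still depends on $N$ (unlike the constant $M_{avg,l}$), so its limit must be evaluated explicitly. First I would show that the prefactor $e^{\frac{K}{P_{s,\max}\ln(1-\frac{1}{N})}}$, the probability of an unbounded-power transmission, vanishes as $N\to\infty$. Since $\ln(1-\frac{1}{N})\sim-\frac{1}{N}$, this factor decays like $e^{-KN/P_{s,\max}}$, which overwhelms the $\log_2(P_{s,\max}\ln N)$ growth inside the first bracket; hence the whole first bracket of (\ref{Cap_up_complete}), together with its $\lambda/\mu_{\min}$ weight, is asymptotically negligible. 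Correspondingly, the complementary factor $(1-e^{\frac{K}{P_{s,\max}\ln(1-\frac{1}{N})}})\to1$, so only the saturated-power (second) bracket survives.

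The key step is then to compute $\lim_{N\to\infty}\textrm{M}_{avg,u}(N)$. I would factor the exponent in the numerator as $\frac{K(R_pP_{s,\max}+N)}{P_{s,\max}}=KR_p+\frac{KN}{P_{s,\max}}$, so that $e^{\frac{K(R_pP_{s,\max}+N)}{P_{s,\max}}}=e^{KR_p}e^{\frac{KN}{P_{s,\max}}}$. For large $N$ each difference $e^{(\cdot)}-1$ is dominated by its leading exponential, so the ratio of the two exponential differences tends to $e^{KR_p}$ while $\frac{N}{R_pP_{s,\max}+N}\to1$. Multiplying by the constant $e^{-R_p(1+K)}$ gives $e^{-R_p(1+K)}e^{KR_p}=e^{-R_p}$, so the fractional term collapses to $(1-p_d)e^{-R_p}$; adding $p_de^{-R_p}$ yields $\textrm{M}_{avg,u}(N)\to e^{-R_p}$, which is precisely what produces the coefficient $\lambda/e^{-R_p}$ appearing in the statement.

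It then remains to discard the subdominant terms. Because $P_{s,\max}\ln N\to\infty$, both the additive $1$ and the multiplicative constant $P_{s,\max}$ are immaterial, so $\log_2(1+P_{s,\max}\ln N)=\log_2(\ln N)+O(1)$; the fixed subtracted quantity $\frac{\textrm{E}_1(1/P_p)e^{1/P_p}}{\ln2}$ is likewise $O(1)$, and both are negligible against the diverging $\log_2(\ln N)$. Substituting $\textrm{M}_{avg,u}(N)\to e^{-R_p}$ and keeping only the leading $\log_2(\ln N)$ in each of the two weighted terms of the second bracket reproduces the claimed expression.

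The step I expect to be the main obstacle is the evaluation of $\lim_{N\to\infty}\textrm{M}_{avg,u}(N)$: both the numerator and the denominator of its fractional term diverge, so the limit cannot be read off by substitution and must instead be extracted through the exponential factoring above, taking care that the $-1$ offsets and the rational factor $\frac{N}{R_pP_{s,\max}+N}$ are indeed subdominant. Once that cancellation is secured, the remaining manipulations are the same order-of-magnitude bookkeeping already used for the lower bound in Corollary~\ref{C_low_infty}.
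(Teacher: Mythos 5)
Your proposal is correct and follows essentially the same route as the paper: the paper's one-sentence proof likewise argues that the bounded-power (second) bracket of (\ref{Cap_up_complete}) dominates as $N\to\infty$ because the prefactor $e^{\frac{K}{P_{s,\max}\ln(1-\frac{1}{N})}}$ vanishes. Your explicit evaluation of $\lim_{N\to\infty}\textrm{M}_{avg,u}(N)=e^{-R_p}$ via the factoring $e^{\frac{K(R_pP_{s,\max}+N)}{P_{s,\max}}}=e^{KR_p}e^{\frac{KN}{P_{s,\max}}}$ — the step the paper leaves entirely implicit — is exactly the justification needed for the coefficient $\lambda/e^{-R_p}$ in the statement, and it is carried out correctly.
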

\begin{proof}
In (\ref{Cap_up_complete}), the capacity via bounded secondary transmit power,
$\gamma_{s,up}=P_{s,\max}\underset{i}{\max}\beta_{s,i}$, becomes
more dominant than the other as $N$ grows.
\end{proof}\vspace{0.05in}

Corollary \ref{corollary3} indicates that the upper bound grows like
$k_u\log_2(\ln N)$, where $0\le k_u\le 1$ is a constant determined
by system parameters. Consequently, we can characterize the
multiuser diversity gain of the secondary network under finite
$P_{s,\max}$ from the scaling law of the lower and upper bounds. The
asymptotic capacity of the secondary network is laid between the
asymptotic lower and upper bounds as
\begin{align}
k_l\log_2(\ln N)\le \mathbb{E}[C_s] \le k_u\log_2(\ln N).
\end{align}

This result implies that the scaling law of secondary network is $k\log_2(\ln N)$, where $k_l\le k \le k_u$.
This scaling law is similar to that in underlaid CR system, $\log_2(\ln N)$, except a constant scaling term $k$ determined by system parameters \cite{MUD_SS}.

\begin{figure}[!t]
    \centerline{\psfig{figure=./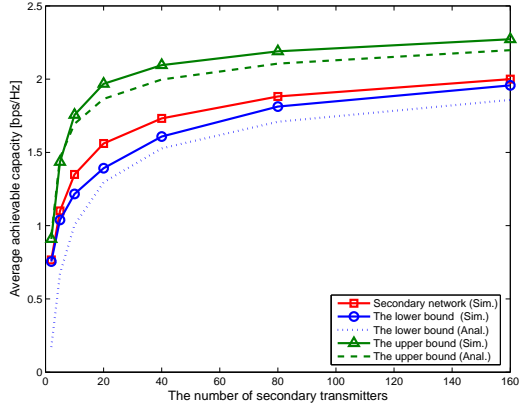,width=0.9\columnwidth} }
    \caption{Average capacities of the secondary network ($p_d=0.8, p_f=0.3, \lambda=0.5, P_p=10\textrm{dB}, P_{s,\max}=P_p, R = 0.5, \mu_{\min}=0.95$)}
    \label{Fig:feasible_power}
\end{figure}

Fig. \ref{Fig:feasible_power} shows the average achievable capacity
of the secondary network, the lower bound, and the upper bound
versus $N$ for $P_{s,\max}=P_p$. It is verified that the lower and
upper bounds on the secondary network capacity are quite well
characterized by the asymptotic approximations even if the number of
secondary transmitter is small. In addition, this figure confirms
that the scaling laws of lower and upper bounds are the same as that
of the exact capacity.

\section{Conclusion}
This paper has investigated the multiuser diversity gain and its
capacity scaling law in an interweaved CR system. We have analyzed the capacity of a secondary network by taking into account
the traffic dynamics of a primary user and reliability of spectrum
sensing in the secondary network. Then, by applying extreme value
theory, we have derived asymptotic capacity of the secondary network
and characterized the multiuser diversity gain. The asymptotic capacity has
been shown to grow like $k\log_2(\ln N)$, where $0\le k\le 1$.
This scaling law of the capacity is similarly observed in underlaid CR systems even
though the mechanism of spectrum access is totally different. Our
numerical results have also verified that the asymptotic capacity
agrees well with the exact capacity even for small number of
secondary transmitters, $N$.

\appendix[Proof of Theorem 1]

The average achievable capacity by the suboptimal two stage
scheduling is given by
\begin{align}\label{cap_sub_sec}
\mathbb{E}[C_{s,low}]=&\Pr\left[ \max_{i}P_{\mu,i}\le P_{s,\max}
\right]\mathbb{E}[C_{s,low}^{(1)}]\nonumber\\
& + \Pr\left[ \max_{i}P_{\mu,i}>
P_{s,\max} \right]\mathbb{E}[C_{s,low}^{(2)}]
\end{align}
where $\mathbb{E}[C_{s,low}^{(1)}]$ and
$\mathbb{E}[C_{s,low}^{(2)}]$ are the average achievable capacities
when there is no element in $\mathcal{S}$ and when there exists at
least one element in $\mathcal{S}$ and given, respectively, by
\begin{align}\label{cap_sub_sec_1st}
\mathbb{E}&[C_{s,low}^{(1)}]=\frac{\lambda}{\mu_{\min}}(1-p_d)\nonumber\\
&~\times\mathbb{E}\left[ \log_2 \left( 1+ \frac{\underset{i}{\max}\frac{K\beta_{s,i}}{\alpha_{s,i}}}{1+\beta_pP_p} \right)\bigg|  \max_i \frac{K}{\alpha_{s,i}}\le P_{s,\max} \right]\nonumber\\
&~+\left( 1-\frac{\lambda}{\mu_{\min}}\right)(1-p_f)\nonumber\\
&~\times\mathbb{E}\left[ \log_2
\left(1+\max_{i}\frac{K\beta_{s,i}}{\alpha_{s,i}} \right)\bigg|
\max_i\frac{K}{\alpha_{s,i}}\le P_{s,\max} \right],
\end{align}
\begin{align}\label{cap_sub_sec_2nd}
\mathbb{E}&[C_{s,low}^{(2)}]=\frac{\lambda}{\mathbb{E}\left[ \mu(\alpha_{s,i^{\dagger}}, P_{s,\max})\Big|\underset{i}{\max}P_{\mu,i}>P_{s,\max} \right]}(1-p_d)\nonumber\\
&~~\times\mathbb{E}\left[ \log_2\left( 1+\frac{\underset{i\in\mathcal{S}}{\max}\beta_{s,i}P_{s,\max} }{ 1+\beta_pP_p } \right) \right]\nonumber\\
&~~+\left(1-\frac{\lambda}{\mathbb{E}\left[
\mu(\alpha_{s,i^{\dagger}},X
P_{s,\max})\Big|\underset{i}{\max}P_{\mu,i}>P_{s,\max}
\right]}\right)\nonumber\\
&~~\times(1-p_f)\mathbb{E}\left[ \log_2 \left( 1+
\underset{i\in\mathcal{S}}{\max}\beta_{s,i}P_{s,\max}\right)
\right].
\end{align}
As $N$ grows, the distribution of maximum random variables in
(\ref{cap_sub_sec_1st}) and (\ref{cap_sub_sec_2nd}) are converges to
the Gumbel distribution or Frech\'et distribution
\cite{Order_statistic}. Using these approximated distributions, we
can asymptotically approximate the expectations as shown in
(\ref{Cap_low_complete}).

\cleardoublepage


\begin{thebibliography}{1}

\bibitem{FCC_Report}
    Federal Communications Commision,
    ``Spectrum policy task force report, (ET Docket No. 02-135),"
    Nov. 2002.

\bibitem{Mag_Jafar}
    S. Srinivasa and S. A. Jafar,
    ``The throughtput potential of cognitive radio: a theoritical perspective,"
    \emph{IEEE Commun. Mag}., May 2007.

%
%
%
%
%
%
%
%


\bibitem{Mitola}
    J. Mitola and G. Q. Maguire,
    ``Cognitive radios: making software radios more personal,"
    \emph{IEEE Personal Communications}., vol. 6, no. 4, pp. 1318, Aug. 1999.


\bibitem{Osvaldo}
    O. Simeone, Y. Bar-Ness, and U. Spagnolini,
    ``Stable throughput of cognitive radios with and without relaying capability,"
    \emph{IEEE Trans. Commun}., vol. 55, no. 12, pp. 2351-2360, Dec. 2007.

\bibitem{Osvaldo_1}
    J. Gambini, O. Simeone, Y. Bar-Ness, U. Spagnolini, and T. Yu,
    ``Packet-wise vertical handover for unlicensed muli-standard spectrum access with cognitive radios,"
    \emph{IEEE Trans. Wireless Commun}., vol. 7, no. 12, pp. 5172-5176, Dec. 2008.

\bibitem{CR_Jafar}
    S. Srinivasa and S. A. Jafar,
    ``How much spectrum sharing is optimal in cognitive radio networks,"
    \emph{IEEE Trans. Wireless Commun}., vol. 7, no. 10, pp. 4010-4017, Oct. 2008.

\bibitem{Etkin}
    R. Etkin, A. Parekh, and D. Tse,
    ``Spectrum sharing for unlicensed bands,"
    \emph{IEEE J. Select. Areas Commun}., vol. 25, no. 3, pp. 517-528, Apr. 2007.

\bibitem{Xing}
    Y. Xing, R. Chandramouli, S. Mangold, and S. Shankar N,
    ``Dynamic spectrum access in open spectrum wireless networks,"
    \emph{IEEE J. Select. Areas Commun}., vol. 24, no. 3, pp. 626-637, Mar. 2006.


\bibitem{Viswanth2002}
    P. Viswanath, D. N. C. Tse, and R. Laroia,
    ``Opportunistic beamforming using dumb antennas,"
    \emph{IEEE Trans. Inf. Theory}, vol. 48, no. 6, pp. 1277.1294, Jun. 2002.


\bibitem{MUD_1}
    C. J. Chen and L. C. Wang,
    ``A unified capacity analysis for wireless systems with joint multiuser scheduling and antenna diversity in Nakagami fading channels,"
    \emph{IEEE Trans. Commun}., vol. 54, no. 3, pp. 469-478, Mar. 2006.

\bibitem{MUD_2}
    G. Song and Y. Li,
    ``Asymptotic throughput analysis for channel-aware scheduling,"
    \emph{IEEE Trans. Commun}., vol. 54, no. 10, pp. 1827-1834, Oct. 2006.

\bibitem{MUD_3}
    Q. Ma and C. Tepedelenlioglu,
    ``Practical multiuser diversity with outdated channel feedback,"
    \emph{IEEE Trans. Vehicular Technol}., vol. 54, no. 4, pp. 1334-1345, Jul. 2005.

\bibitem{Choi2008}
    W. Choi and J. G. Andrews,
    ``The capacity gain from intercell scheduling in multi-antenna systems,"
    \emph{ IEEE Trans. Wireless Commun}., vol. 7, no. 2, pp. 714-725, Feb. 2008

\bibitem{MUD_SS}
    T. W. Ban, W. Choi, B. C. Jung, and D. K. Sung,
    ``Multi-user diversity in a spectrum sharing system,"
    \emph{IEEE Trans. Wireless Commun}., vol. 8, no. 1, pp. 102-106, Jan. 2009.

\bibitem{MUD_SS_Liang}
    R. Zhang and Y. C. Liang,
    ``Investigation on multiuser diversity in spectrum sharing based cognitive radio networks,"
    \emph{IEEE Commun. Letters}, vol. 14, no. 2, pp. 133-135, Feb. 2010.

\bibitem{Scalinglaw_Tarokh}
    M. Vu and V. Tarokh,
    ``Scaling laws of single-hop cognitive networks,"
    \emph{IEEE Trans. Wireless Commnun}., vol. 8, no. 8, pp. 4089-4097, Aug. 2009.

%


%



\bibitem{Order_statistic}
    H. A. David,
    \emph{Order Statistics}, 1st ed.
    New York: Wiley, 1970.


\end{thebibliography}
\end{document}